\newtheorem{definition}{Definition}
\newtheorem{lemma}[definition]{Lemma}
\newtheorem{theorem}[definition]{Theorem}
\newtheorem{corollary}[definition]{Corollary}
\def\squareforqed{$\square$}
\def\qed{\ifmmode\squareforqed\else{\unskip\nobreak\hfil
\penalty50\hskip1em\null\nobreak\hfil\squareforqed
\parfillskip=0pt\finalhyphendemerits=0\endgraf}\fi}
\def\endenv{\ifmmode\;\else{\unskip\nobreak\hfil
\penalty50\hskip1em\null\nobreak\hfil\;
\parfillskip=0pt\finalhyphendemerits=0\endgraf}\fi}
\newenvironment{proof}{\noindent \textbf{{Proof.~} }}{\qed}
\def\bpf{\begin{proof}}
\def\epf{\end{proof}}
\def\bea{\begin{eqnarray}}
\def\eea{\end{eqnarray}}
\def\beq{\begin{equation}}
\def\eeq{\end{equation}}
\def\bal{\begin{aligned}}
\def\eal{\end{aligned}}
\def\bma{\begin{bmatrix}}
\def\ema{\end{bmatrix}}
\def\tr{{\rm Tr}}
\def\a{\alpha}
\def\b{\beta}
\def\g{\gamma}
\def\r{\rho}
\newcommand{\trace}[1]{\mathrm{Tr}(#1)}
\newcommand{\nc}{\newcommand}
\nc{\bbA}{\mathbb{A}} \nc{\bbB}{\mathbb{B}} \nc{\bbC}{\mathbb{C}}
\nc{\bbD}{\mathbb{D}} \nc{\bbE}{\mathbb{E}} \nc{\bbF}{\mathbb{F}}
\nc{\bbG}{\mathbb{G}} \nc{\bbH}{\mathbb{H}} \nc{\bbI}{\mathbb{I}}
\nc{\bbJ}{\mathbb{J}} \nc{\bbK}{\mathbb{K}} \nc{\bbL}{\mathbb{L}}
\nc{\bbM}{\mathbb{M}} \nc{\bbN}{\mathbb{N}} \nc{\bbO}{\mathbb{O}}
\nc{\bbP}{\mathbb{P}} \nc{\bbQ}{\mathbb{Q}} \nc{\bbR}{\mathbb{R}}
\nc{\bbS}{\mathbb{S}} \nc{\bbT}{\mathbb{T}} \nc{\bbU}{\mathbb{U}}
\nc{\bbV}{\mathbb{V}} \nc{\bbW}{\mathbb{W}} \nc{\bbX}{\mathbb{X}}
\nc{\bbY}{\mathbb{Y}} \nc{\bbZ}{\mathbb{Z}}
\nc{\bA}{{\bf A}} \nc{\bB}{{\bf B}} \nc{\bC}{{\bf C}}
\nc{\bD}{{\bf D}} \nc{\bE}{{\bf E}} \nc{\bF}{{\bf F}}
\nc{\bG}{{\bf G}} \nc{\bH}{{\bf H}} \nc{\bI}{{\bf I}}
\nc{\bJ}{{\bf J}} \nc{\bK}{{\bf K}} \nc{\bL}{{\bf L}}
\nc{\bM}{{\bf M}} \nc{\bN}{{\bf N}} \nc{\bO}{{\bf O}}
\nc{\bP}{{\bf P}} \nc{\bQ}{{\bf Q}} \nc{\bR}{{\bf R}}
\nc{\bS}{{\bf S}} \nc{\bT}{{\bf T}} \nc{\bU}{{\bf U}}
\nc{\bV}{{\bf V}} \nc{\bW}{{\bf W}} \nc{\bX}{{\bf X}}
\nc{\bY}{{\bf Y}} \nc{\bZ}{{\bf Z}}
\nc{\bmA}{{\bm A}} \nc{\bmB}{{\bm B}} \nc{\bmC}{{\bm C}}
\nc{\bmD}{{\bm D}} \nc{\bmE}{{\bm E}} \nc{\bmF}{{\bm F}}
\nc{\bmG}{{\bm G}} \nc{\bmH}{{\bm H}} \nc{\bmI}{{\bm I}}
\nc{\bmJ}{{\bm J}} \nc{\bmK}{{\bm K}} \nc{\bmL}{{\bm L}}
\nc{\bmM}{{\bm M}} \nc{\bmN}{{\bm N}} \nc{\bmO}{{\bm O}}
\nc{\bmP}{{\bm P}} \nc{\bmQ}{{\bm Q}} \nc{\bmR}{{\bm R}}
\nc{\bmS}{{\bm S}} \nc{\bmT}{{\bm T}} \nc{\bmU}{{\bm U}}
\nc{\bmV}{{\bm V}} \nc{\bmW}{{\bm W}} \nc{\bmX}{{\bm X}}
\nc{\bmY}{{\bm Y}} \nc{\bmZ}{{\bm Z}}
\nc{\cA}{{\cal A}} \nc{\cB}{{\cal B}} \nc{\cC}{{\cal C}}
\nc{\cD}{{\cal D}} \nc{\cE}{{\cal E}} \nc{\cF}{{\cal F}}
\nc{\cG}{{\cal G}} \nc{\cH}{{\cal H}} \nc{\cI}{{\cal I}}
\nc{\cJ}{{\cal J}} \nc{\cK}{{\cal K}} \nc{\cL}{{\cal L}}
\nc{\cM}{{\cal M}} \nc{\cN}{{\cal N}} \nc{\cO}{{\cal O}}
\nc{\cP}{{\cal P}} \nc{\cQ}{{\cal Q}} \nc{\cR}{{\cal R}}
\nc{\cS}{{\cal S}} \nc{\cT}{{\cal T}} \nc{\cU}{{\cal U}}
\nc{\cV}{{\cal V}} \nc{\cW}{{\cal W}} \nc{\cX}{{\cal X}}
\nc{\cY}{{\cal Y}} \nc{\cZ}{{\cal Z}}
\nc{\hA}{{\hat{A}}} \nc{\hB}{{\hat{B}}} \nc{\hC}{{\hat{C}}}
\nc{\hD}{{\hat{D}}} \nc{\hE}{{\hat{E}}} \nc{\hF}{{\hat{F}}}
\nc{\hG}{{\hat{G}}} \nc{\hH}{{\hat{H}}} \nc{\hI}{{\hat{I}}}
\nc{\hJ}{{\hat{J}}} \nc{\hK}{{\hat{K}}} \nc{\hL}{{\hat{L}}}
\nc{\hM}{{\hat{M}}} \nc{\hN}{{\hat{N}}} \nc{\hO}{{\hat{O}}}
\nc{\hP}{{\hat{P}}} \nc{\hR}{{\hat{R}}} \nc{\hS}{{\hat{S}}}
\nc{\hT}{{\hat{T}}} \nc{\hU}{{\hat{U}}} \nc{\hV}{{\hat{V}}}
\nc{\hW}{{\hat{W}}} \nc{\hX}{{\hat{X}}} \nc{\hZ}{{\hat{Z}}}
\nc{\hn}{{\hat{n}}}
\def\tr{\mathop{\rm Tr}}
\begin{document}

\title{Superadditivity of Convex Roof Coherence Measures in Multipartite System}

\author{Honglin Ren}\email[]{2409136@buaa.edu.cn}
\affiliation{LMIB(Beihang University), Ministry of education, and School of Mathematical Sciences, Beihang University, Beijing 100191, China} 

\author{Lin Chen}\email[]{linchen@buaa.edu.cn (corresponding author)}
\affiliation{LMIB(Beihang University), Ministry of education, and School of Mathematical Sciences, Beihang University, Beijing 100191, China}
\begin{abstract}
In this paper, we investigate the convex roof measure of quantum coherence, with a focus on their superadditive properties. We propose sufficient conditions and establish a framework for coherence superadditivity in tripartite and multipartite systems. Through theoretical derivation, the relevant theorems are given. These results not only expand our understanding of the superadditivity of pure and mixed states but also characterize the conditions under which the superadditivity relations reach equality. Finally, the proposed methods and conclusions are verified through representative examples, providing new theoretical insights into the distribution of quantum coherence in multi-part systems.
\end{abstract}

\maketitle


\section{Introduction}
\label{sec:int}

Coherence is a fundamental feature of quantum theory and plays a central role in quantum information processing and foundational physics. The foundational achievements of quantum information science have revealed the powerful potential of quantum resources such as entanglement and coherence. For example, quantum cryptography based on entanglement correlations \cite{ekert1991quantum}, quantum teleportation \cite{bennett1993teleporting}, and Shor’s polynomial-time quantum algorithm \cite{shor1997polynomial}. All have demonstrated the non-classical advantages of quantum systems in information processing. These achievements have been systematically expounded in classic textbooks and reviews of quantum information theory \cite{nielsen2000quantum,plenio2014introduction,nielsen2010quantum}. Coherence was soon recognized not only as a formal property, but also as a useful physical resource. This inspired a resource-theoretic perspective, which formalized coherence measures that met rigorous axioms \cite{baumgratz2014quantifying}. Later, an operational approach to coherence as a resource was developed \cite{winter2016operational}, and coherence was also studied in broader resource theories \cite{streltsov2017colloquium}. Building on this framework, researchers explored various directions: coherence as a source of intrinsic randomness \cite{yuan2015intrinsic}, its transformation into quantum correlations \cite{ma2016converting}, and its behavior under decoherence processes \cite{zhang2018quantum}. In addition, scholars have proposed a coherence measurement method based on skew information \cite{sheng2021applications} and coherence vectors \cite{Bosyk2020}. A particularly intricate area of inquiry concerns coherence in multipartite systems. Although efforts have been made to understand the coherence distribution and superadditivity in general settings. Notable developments include the analysis of convex roof measures of coherence and their nontrivial superadditive behavior \cite{Liu2018superadditivity}, as well as advances in separability-based characterizations \cite{Cao221separability}. Recent theoretical progress has introduced new sufficient conditions for superadditivity in the coherence of formation and convex roof constructions. This includes studies on multipartite coherence distribution and generalized convex roof quantification frameworks \cite{Zhu2024}. Further work has extended analytical tools to compute such measures in large systems \cite{Shi2025} and formal unification across resource theories \cite{Yang2024}. New insights into coherence inequalities and multiqubit interactions have also emerged \cite{Qi2020}.

Despite these advances, a general understanding of when convex roof coherence measures exhibit superadditivity remains incomplete. This calls for a more unified treatment of superadditivity in both pure and mixed multipartite systems. In this work, our aim is to address this gap by developing general structural conditions under which superadditivity holds. We also analyze the corresponding equality cases and provide illustrative examples to demonstrate the applicability and tightness of our results. The focus of this paper is to investigate the superadditivity of convex roof coherence measures in multipartite quantum systems and the condition that the bipartite states satisfy additivity. We begin by establishing sufficient conditions under which the superadditivity holds for tripartite pure states. Specifically, we provide a structural criterion for the superadditivity of bipartite systems in Lemma \ref{lemma}. This lemma serves as the foundational component for subsequent generalizations. In Theorem \ref{thm:main}, we rigorously establish a superadditivity inequality for the convex roof coherence measure $C_f$, which states that the coherence of a tripartite pure state is lower bounded by a weighted sum of coherence measures associated with its reduced pure states.We then extend our analysis to the general multipartite setting in Theorem \ref{thm:An}. We establish a sufficient condition for superadditivity in arbitrary multipartite pure states. Furthermore, we extend these results to mixed states through convex roof constructions. In Theorem \ref{cor:f}, we investigate the cases where the superadditivity inequality is saturated. To illustrate these results, we analyze several concrete examples, including the coherence measure of formation, the coherence measure of the convex roof based on the entropy $\frac{1}{2}$ and based on the specific function $f$.

The remainder of this paper is organized as follows. In Sec. \ref{sec:pre}, we introduce the necessary definitions, notation, and known properties of convex roof coherence measures that are used throughout the paper. In Sec. \ref{sec:sup}, we present a sufficient condition for the superadditivity of convex roof coherence measures in tripartite systems and generalize the result to pure multipartite states and mixed states via convex roof extensions. Sec. \ref{sec:equ} focuses on the characterization of equality conditions, including both pure and mixed states, and provides examples when superadditivity inequalities are saturated. Finally, Sec. \ref{sec:con} concludes the paper with a summary of our contributions and discusses open questions for future research.

\section{Preliminaries}
\label{sec:pre}

Before presenting our findings, we first review some concepts that are relevant to the problem in this paper. Let $\mathcal{H}$ be a finite-dimensional Hilbert space with a fixed computational basis $\{|i\rangle\}_{i=0}^{d-1}$. The coherence of a state is measured with respect to a particular reference basis. In this passage, we denote the particular basis by $\{|i\rangle\}_{i=0}^{d-1}$ Then an incoherent state is denoted as $\delta =\sum _{i}p_{i}|i\rangle\langle i|$, with $\sum_{i}p_{i}=1$. The set of all incoherent states is denoted by $\mathcal{I}$. All other states are called coherent states, a general state is denoted as $\r$. An incoherent operation is defined as a completely positive trace-preserving map, $\Lambda(\r)=\sum_{n}K_{n}\r K_{n}^{\dag}$, with $\sum_{n}K_{n}^{\dagger}K_{n}=I$ and $K_{n}IK_{n}^{\dagger} \in \mathcal{I}$ for each $K_{n}$. This means that each $K_{n}$ maps an incoherent state to an incoherent state. With the above notation, we can give the definition of the coherence measure. 

\begin{definition}
\label{def:c}
A function $C$ can be considered as a coherence measure if it satisfies the following four conditions,
\begin{itemize}
    \item 
Positive definiteness: $ \ C(\rho)\geq 0$ and $C(\rho)=0$ if and only if $\rho \in \mathcal{I}$;
    \item Monotonicity under incoherent operations: $ \ C(\rho)\geq C(\Lambda(\rho))$ if $\Lambda$ is an incoherent operation ;
    \item Monotonicity under selective measurements on average: $\ C(\rho)\geq \sum_{n}p_{n}C(\rho_{n})$, $p_{n}=Tr(K_{n}\rho K_{n}^{\dagger})$, $\rho_{n}=K_{n}\rho K_{n}^{\dagger}/p_{n}$ and $\Lambda(\rho)=\sum_{n}K_{n}\rho K_{n}^{\dagger}$ is an incoherent operation ;
    \item Convexity: $\sum_{n}q_{n}C(\rho_{n})\geq C(\sum_{n}q_{n}\rho_{n})$ for any set of states$\{\rho_{n}\}$ and probability distribution $\{q_{n}\}$.
\end{itemize}
\end{definition}

Building on the general requirements above, we now introduce the specific class of convex roof coherence measures, which are central to our analysis.

\begin{definition}

    For a pure state $|\varphi\rangle =\sum_{i=0}^{d-1}c_{i}|i\rangle$, we define $C_{f}(\varphi):=f(|c_{0}|^{2},|c_{1}|^{2},...,|c_{d-1}|^{2})$. The extension to mixed states can be expressed as follows,
$$C_{f}(\rho)=\inf_{\{p_{i},\varphi_{i}\}}\sum_{i}p_{i}C_{f}(\varphi_{i}).$$
Here the infimum is taken over all possible ensemble decompositions $\rho=\sum_{i}p_{i}|\varphi_{i}\rangle\langle\varphi_{i}|$ with $p_{i}\geq0$ and $\sum_{i}p_{i}=1$.
\end{definition}

A coherence measure $C$ for a bipartite state is said to be superadditive if the following relation holds for all density matrices $\r_{AB}$ of a finite-dimensional system with respect to a fixed reference basis $\{|i\rangle_{A}\bigotimes |j\rangle_{B}\}$,
\begin{equation}
\label{eq:AB}
    C_{f}(\r_{AB})\geq C_{f}(\r_{A})+C_{f}(\r_{B}).
\end{equation}
Here, $\r_{A}=\tr_{B}\r_{AB}$ and $\r_{B}=\tr_{A}\r_{AB}$ are relative to the basis $\{|i\rangle_{A}\}$ and $\{|j\rangle_{B}\}$, respectively. 

First, we introduce a sufficient condition for superadditivity in the bipartite states\cite{Liu2018superadditivity}.
\begin{lemma}
\label{lemma}
A convex roof coherence measure $C_{f}$ is superadditive if the following inequality is satisfied for all pure states $|\varphi\rangle_{AB}=\sum_{ij}c_{ij}|i\rangle_{A}|j\rangle_{B}$ with $\sum_{ij}|c_{ij}|^{2}=1$,
\begin{equation}
\label{eq:2}
    C_{f}(|\varphi\rangle_{AB})\geq C_{f}(\sum_{i}\sqrt{q_{i}}|i\rangle_{A})+\sum_{i}q_{i}C_{f}(|\varphi_{i}\rangle_{B}).
\end{equation}
Here $q_{i}=\sum_{j}|c_{ij}|^{2}$ and $|\varphi_{i}\rangle_{B}=\frac{1}{\sqrt{q_{i}}}\sum_{j}c_{ij}|j\rangle_{B}$.
\end{lemma}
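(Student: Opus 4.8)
The plan is to reduce the general superadditivity statement \eqref{eq:AB} for arbitrary mixed states $\r_{AB}$ to the pure-state inequality \eqref{eq:2}, using the convexity of $C_f$ together with the monotonicity of $C_f$ under tracing out a subsystem (which follows from the selective-measurement axiom applied to the incoherent operation that discards a subsystem). First I would fix an arbitrary $\r_{AB}$ and take an optimal ensemble decomposition $\r_{AB}=\sum_k p_k |\varphi_k\rangle\langle\varphi_k|_{AB}$ achieving the infimum in the definition of $C_f(\r_{AB})$, so that $C_f(\r_{AB})=\sum_k p_k C_f(|\varphi_k\rangle_{AB})$. Writing each $|\varphi_k\rangle_{AB}=\sum_{ij}c^{(k)}_{ij}|i\rangle_A|j\rangle_B$, I set $q^{(k)}_i=\sum_j|c^{(k)}_{ij}|^2$ and $|\varphi_{k,i}\rangle_B=\tfrac{1}{\sqrt{q^{(k)}_i}}\sum_j c^{(k)}_{ij}|j\rangle_B$, and apply hypothesis \eqref{eq:2} termwise.

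The next step is to handle the two resulting sums separately. For the $A$-part: the reduced state is $\r_A=\sum_k p_k \r_{A}^{(k)}$ where $\r_A^{(k)}=\tr_B|\varphi_k\rangle\langle\varphi_k|_{AB}$ has diagonal entries $q^{(k)}_i$ in the reference basis; since $C_f$ depends only on the squared moduli of amplitudes, $C_f\big(\sum_i\sqrt{q^{(k)}_i}\,|i\rangle_A\big)\ge C_f(\r_A^{(k)})$ by monotonicity under the incoherent dephasing/marginal map (the pure state $\sum_i\sqrt{q_i^{(k)}}|i\rangle_A$ and the mixed state $\r_A^{(k)}$ share the same diagonal, and $C_f$ of a state is at least $C_f$ of its diagonal-preserving reductions), and then $\sum_k p_k C_f(\r_A^{(k)})\ge C_f(\sum_k p_k\r_A^{(k)})=C_f(\r_A)$ by convexity. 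For the $B$-part: observe that $\{p_k q^{(k)}_i,\ |\varphi_{k,i}\rangle_B\}_{k,i}$ is a valid ensemble decomposition of $\r_B=\sum_{k,i}p_k q^{(k)}_i\,|\varphi_{k,i}\rangle\langle\varphi_{k,i}|_B$, so by the definition of the convex roof, $\sum_{k,i}p_k q^{(k)}_i C_f(|\varphi_{k,i}\rangle_B)\ge C_f(\r_B)$. Combining, $C_f(\r_{AB})=\sum_k p_k C_f(|\varphi_k\rangle_{AB})\ge C_f(\r_A)+C_f(\r_B)$, which is \eqref{eq:AB}.

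The main obstacle I anticipate is justifying the $A$-side inequality $C_f\big(\sum_i\sqrt{q_i}|i\rangle_A\big)\ge C_f(\r_A)$ cleanly from the axioms: one must exhibit $\r_A$ (or a state with the same value of $C_f$) as the image of the pure state $\sum_i\sqrt{q_i}|i\rangle_A$ under an incoherent operation, or argue directly that $C_f$ of any state is bounded below by $C_f$ evaluated on a pure vector carrying its diagonal — here I would use that tracing out $B$ from $|\varphi_k\rangle_{AB}$ is itself an incoherent CPTP map on the $AB$ system followed by restriction, and invoke Definition \ref{def:c}'s monotonicity; alternatively, one notes that for a pure state the convex-roof value equals $f$ of the marginal's diagonal, and the mixed-state convex roof can only decrease it. The rest is bookkeeping: verifying that the doubly-indexed families above are genuine ensembles (nonnegative weights summing to one) and that hypothesis \eqref{eq:2} is applied to each $|\varphi_k\rangle_{AB}$ with its own $q^{(k)}_i$ and $|\varphi_{k,i}\rangle_B$.
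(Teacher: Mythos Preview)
The paper does not actually prove Lemma~\ref{lemma}; it is quoted from \cite{Liu2018superadditivity} as a known result, so there is no in-paper proof to match against directly. That said, the paper's proofs of Theorems~\ref{thm:main} and~\ref{thm:An} follow exactly the two-stage template you propose (optimal decomposition of the global state, then convexity on the marginals), so your overall architecture is the right one and agrees with how the paper handles the multipartite analogues.

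The genuine soft spot is precisely the $A$-side inequality you flag: $C_f\big(\sum_i\sqrt{q_i^{(k)}}\,|i\rangle_A\big)\ge C_f(\rho_A^{(k)})$. Your suggested justifications do not quite land as stated. Tracing out $B$ is an incoherent map from $|\varphi_k\rangle_{AB}$ to $\rho_A^{(k)}$, not from the pure $A$-vector $\sum_i\sqrt{q_i^{(k)}}|i\rangle_A$ to $\rho_A^{(k)}$, so that version of monotonicity gives $C_f(|\varphi_k\rangle_{AB})\ge C_f(\rho_A^{(k)})$, which is weaker than what you need. Two clean ways to close the gap: (i) observe that $|\psi^{(k)}\rangle\langle\psi^{(k)}|_A$ with $|\psi^{(k)}\rangle=\sum_i\sqrt{q_i^{(k)}}|i\rangle$ and $\rho_A^{(k)}$ share the same diagonal while $|(\rho_A^{(k)})_{ii'}|\le\sqrt{q_i^{(k)}q_{i'}^{(k)}}$ by Cauchy--Schwarz, so a random diagonal-unitary (phase-averaging) channel --- which is incoherent --- sends the former to the latter, and monotonicity applies; or (ii) use that $f$ is concave on the probability simplex (a standard hypothesis for $C_f$ to satisfy Definition~\ref{def:c}), which gives $\sum_m r_m f(\vec p_m)\le f(\sum_m r_m\vec p_m)=f(\vec q^{(k)})$ for any ensemble of $\rho_A^{(k)}$, hence $C_f(\rho_A^{(k)})\le f(\vec q^{(k)})=C_f(|\psi^{(k)}\rangle)$. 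Alternatively --- and this is what the paper effectively does in the multipartite theorems --- you can sidestep the issue entirely by working with the \emph{alternative} form of the hypothesis stated just after Lemma~\ref{lemma}, in which the $A$-term is already an ensemble average $\sum_j p_j C_f(|\varphi_j\rangle_A)$; then $\rho_A=\sum_j p_j|\varphi_j\rangle\langle\varphi_j|_A$ is an honest decomposition and the convex-roof definition gives the bound immediately, symmetrically with your $B$-side argument.
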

An alternative expression of the sufficient condition for the convex roof coherence measure $C_{f}$ to be superadditive can be stated as
\begin{equation}
    C_{f}(|\varphi\rangle_{AB})\geq \sum_{j}p_{j}C_{f}(|\varphi_{j}\rangle_{A})+\sum_{i}q_{i}C_{f}(|\varphi_{i}\rangle_{B}).
\end{equation}
Here, $p_{j}=\sum_{i}|c_{ij}|^{2}$ and $|\varphi_{j}\rangle_{A}=\frac{1}{\sqrt{p_{j}}}\sum_{i}c_{ij}|i\rangle_{A}$.

\section{Superadditivity of convex roof coherence measures}
\label{sec:sup}
Building on the definitions of coherence measures and convex roof introduced in Section II, we now investigate their superadditivity properties. Although previous studies primarily addressed bipartite cases, our analysis extends to tripartite and general multipartite states. We begin with tripartite systems and subsequently generalize the results to n-partite states. 
\subsection{superadditivity of tripartite states}
We next consider the superadditivity properties of convex roof coherence measurement in the tripartite states. 

\begin{definition}
    A coherence measure $C$ for tripartite states is said to be superadditive if the following relation is valid for all density matrices $\r_{ABC}$ on a finite-dimensional Hilbert space with respect to a particular reference basis $\{|i\rangle_{A}\otimes |j\rangle_{B}\otimes |k\rangle_{C}\}$,
\begin{equation}
\label{eq:ABC}
    C_{f}(\r_{ABC})\geq C_{f}(\r_{A})+C_{f}(\r_{B})+C_{f}(\r_{C}).
\end{equation}
Here $\r_{A}=\tr_{BC}\r_{ABC},\ \r_{B}=\tr_{AC}\r_{ABC},\ \r_{C}=\tr_{AB}\r_{ABC}$ with respect to $\{|i\rangle_{A}\otimes |j\rangle_{B}\otimes |k\rangle_{C}\}$.
\end{definition}

\begin{theorem}
\label{thm:main}
A convex roof coherence measure $C_{f}$ is superadditive if the following inequality is satisfied for all pure states $|\varphi\rangle_{ABC}=\sum_{ijk}c_{ijk}|i\rangle_{A}|j\rangle_{B}|k\rangle_{C}$ with $\sum_{ijk}|c_{ijk}|^{2}=1$,
\begin{equation}
\label{eq:a}
    C_{f}(|\varphi\rangle_{ABC})\geq \sum_{i}p_{i}C_{f}(|\a_{i}\rangle_{A})+\sum_{j}q_{j}C_{f}(|\b_{j}\rangle_{B})+\sum_{k}r_{k}C_{f}(|\g_{k}\rangle_{C}).
\end{equation}
Here $p_{i}=\sum_{jk}|c_{ijk}|^{2}$, $q_{j}=\sum_{ik}|c_{ijk}|^{2}$, $r_{k}=\sum_{ij}|c_{ijk}|^{2}$ and $|\a_{i}\rangle_{A}=\frac{1}{\sqrt{p_{i}}}\sum_{jk}c_{ijk}|i\rangle_{A}$, $|\b_{j}\rangle_{B}=\frac{1}{\sqrt{q_{j}}}\sum_{ik}c_{ijk}|j\rangle_{B}$, $|\g_{k}\rangle_{C}=\frac{1}{\sqrt{r_{k}}}\sum_{ij}c_{ijk}|k\rangle_{C}$.
\end{theorem}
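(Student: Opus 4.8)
The plan is to follow the template of Lemma~\ref{lemma}, promoting the pure-state inequality~\eqref{eq:a} (which is the hypothesis of the theorem) to arbitrary mixed states through the convex-roof construction. First I would fix a density matrix $\rho_{ABC}$ and, for $\epsilon>0$, pick a pure-state decomposition $\rho_{ABC}=\sum_{m}t_{m}\proj{\varphi^{(m)}}$ with $\sum_{m}t_{m}C_{f}(\ket{\varphi^{(m)}}_{ABC})\le C_{f}(\rho_{ABC})+\epsilon$. Writing $\ket{\varphi^{(m)}}_{ABC}=\sum_{ijk}c^{(m)}_{ijk}\ket{i}_{A}\ket{j}_{B}\ket{k}_{C}$, applying~\eqref{eq:a} to each $\ket{\varphi^{(m)}}_{ABC}$, and averaging over $m$ with the weights $t_{m}$ gives
\[
\begin{aligned}
C_{f}(\rho_{ABC})+\epsilon \ \ge\ {} & \sum_{m}t_{m}\sum_{i}p^{(m)}_{i}C_{f}(\ket{\alpha^{(m)}_{i}}_{A})+\sum_{m}t_{m}\sum_{j}q^{(m)}_{j}C_{f}(\ket{\beta^{(m)}_{j}}_{B})\\
{} & +\sum_{m}t_{m}\sum_{k}r^{(m)}_{k}C_{f}(\ket{\gamma^{(m)}_{k}}_{C}),
\end{aligned}
\]
where $p^{(m)}_{i},q^{(m)}_{j},r^{(m)}_{k}$ and $\ket{\alpha^{(m)}_{i}}_{A},\ket{\beta^{(m)}_{j}}_{B},\ket{\gamma^{(m)}_{k}}_{C}$ are the weights and conditional vectors from~\eqref{eq:a} built out of the amplitudes $c^{(m)}_{ijk}$.

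The next step is to bound each of the three sums on the right by the coherence of the corresponding marginal. Consider the $A$-sum. The key observation is that, after relabeling the pair $(m,i)$ as a single ensemble index, the family $\{\,t_{m}p^{(m)}_{i}\,,\,\ket{\alpha^{(m)}_{i}}_{A}\,\}$ is a pure-state decomposition of $\rho_{A}$: a direct expansion of the amplitudes shows $\sum_{i}p^{(m)}_{i}\proj{\alpha^{(m)}_{i}}_{A}=\tr_{BC}\proj{\varphi^{(m)}}=\rho^{(m)}_{A}$ for every $m$, while $\sum_{m}t_{m}\rho^{(m)}_{A}=\rho_{A}$. Hence, directly from the definition of $C_{f}$ as the infimum over pure-state ensembles, $\sum_{m}t_{m}\sum_{i}p^{(m)}_{i}C_{f}(\ket{\alpha^{(m)}_{i}}_{A})\ge C_{f}(\rho_{A})$, and the identical argument---tracing out $AC$, respectively $AB$---gives $\sum_{m}t_{m}\sum_{j}q^{(m)}_{j}C_{f}(\ket{\beta^{(m)}_{j}}_{B})\ge C_{f}(\rho_{B})$ and $\sum_{m}t_{m}\sum_{k}r^{(m)}_{k}C_{f}(\ket{\gamma^{(m)}_{k}}_{C})\ge C_{f}(\rho_{C})$. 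Substituting these into the displayed inequality and letting $\epsilon\to0$ yields $C_{f}(\rho_{ABC})\ge C_{f}(\rho_{A})+C_{f}(\rho_{B})+C_{f}(\rho_{C})$, which is precisely the tripartite superadditivity~\eqref{eq:ABC}.

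The step I expect to be the main obstacle is the middle one: verifying that the three regrouped families genuinely are ensembles of $\rho_{A},\rho_{B},\rho_{C}$. This is a matter of carefully unwinding the definitions of the conditional vectors in~\eqref{eq:a} and keeping the index bookkeeping straight---in particular, that grouping the amplitudes $c^{(m)}_{ijk}$ according to the $A$-label and forgetting the $B,C$ labels reconstructs $\rho^{(m)}_{A}$, and likewise for the other two parties. One alternative is worth flagging: if the $A$-, $B$-, $C$-terms of~\eqref{eq:a} are instead read as single pure states whose squared amplitudes equal the computational-basis diagonals of $\rho_{A},\rho_{B},\rho_{C}$, then the regrouped objects are no longer literal ensembles of the marginals, and one then needs the extra fact that, for a symmetric concave generating function $f$, the pure state $\sum_{i}\sqrt{x_{i}}\ket{i}$ maximizes $C_{f}$ over all states with computational-basis diagonal $(x_{i})$; combined with convexity of the convex roof, $\sum_{m}t_{m}C_{f}(\rho^{(m)}_{A})\ge C_{f}(\rho_{A})$, this again closes the $A$-sum, and similarly for $B$ and $C$. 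In either reading the argument uses nothing beyond the hypothesis~\eqref{eq:a} and the general properties of $C_{f}$ collected in Section~\ref{sec:pre}.
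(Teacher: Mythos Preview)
Your proposal is correct and follows essentially the same route as the paper: apply the pure-state hypothesis~\eqref{eq:a} term-by-term to an optimal (or $\epsilon$-optimal) decomposition of $\rho_{ABC}$, and then use that the conditional families form pure-state ensembles of the marginals to pass to $C_{f}(\rho_{A})+C_{f}(\rho_{B})+C_{f}(\rho_{C})$. The only organizational difference is that the paper splits this into two explicit steps---first establishing~\eqref{eq:ABC} for pure $\rho_{ABC}$ via the infimum, then extending to mixed states via convexity of $C_{f}$---whereas you collapse both into a single use of the infimum on the combined ensemble $\{t_{m}p^{(m)}_{i},\ket{\alpha^{(m)}_{i}}_{A}\}$; your $\epsilon$-approximation is also a mild rigorization over the paper's appeal to an ``optimal decomposition''.
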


\begin{proof}
We first examine Eq. \eqref{eq:a} in the case where $ \r_{ABC} $ is a pure state. We use $\r_{A}=\sum_{j}p_{j}^{'}|\psi_{j}\rangle \langle\psi_{j}|_{A}$ to represent the optimal decomposition of $\r_{A}$ that achieves the infimum in the definition of $C_{f}$. Since $\r_{A}=\sum_{i}p_{i}|\a_{i}\rangle\langle\a_{i}|_{A}$ is also an ensemble decomposition of $\r_{A}$, there must be $\sum_{i}p_{i}C_{f}(|\a_{i}\rangle_{A})\geq \sum_{i}p_{j}^{'}C_{f}(|\psi_{j}\rangle_{A})=C_{f}(\r_{A})$ according to the definition of the coherence measure. 

Similarly, we can obtain $\sum_{j}q_{j}C_{f}(|\b_{j}\rangle_{B})\geq C_{f}(\r_{B})$ and $\sum_{k}r_{k}C_{f}(|\g_{k}\rangle_{C})\geq C_{f}(\r_{C})$. This means that for pure states $\r_{ABC}$, 
\begin{equation}
\label{eq:pure}
\begin{split}
     C_{f}(\r_{ABC}) & \geq \sum_{i}p_{i}C_{f}(|\a_{i}\rangle_{A})+\sum_{j}q_{j}C_{f}(|\b_{j}\rangle_{B})+\sum_{k}r_{k}C_{f}(|\g_{k}\rangle_{C})\\
     & \geq C_{f}(\r_{A})+C_{f}(\r_{B})+C_{f}(\r_{C}).
\end{split}
\end{equation}

Second, we prove that Eq. \eqref{eq:ABC} holds for all states if it holds for pure states $\r_{ABC}$. To this end, we use $\r_{ABC}=\sum_{i}p_{i}|\psi_{i}\rangle\langle \psi_{i}|_{ABC}$ to represent one of the optimal decompositions giving $C_{f}(\r_{ABC})$. Using Eq. \eqref{eq:pure}, we have 
\begin{equation}
    C_{f}(\r_{ABC})=\sum_{i}p_{i}C_{f}(|\psi_{i}\rangle_{ABC})\geq \sum_{i}p_{i}(C_{f}((\r^{i})_{A})+C_{f}((\r^{i})_{B})+C_{f}(\r^{i})_{C})).
\end{equation}
Here $(\r^{i})_{A}=\tr_{BC}|\psi_{i}\rangle\langle\psi_{i}|_{ABC}$, $(\r^{i})_{B}=\tr_{AC}|\psi_{i}\rangle\langle\psi_{i}|_{ABC}$ and $(\r^{i})_{C}=\tr_{AB}|\psi_{i}\rangle\langle\psi_{i}|_{ABC}$.\\
According to the convexity of a coherence measure, there are $\sum_{i}p_{i}C_{f}((\r^{i})_{A})\geq C_{f}(\sum_{i}p_{i}(\r^{i})_{A})$, $\sum_{i}p_{i}C_{f}((\r^{i})_{B})\geq C_{f}(\sum_{i}p_{i}(\r^{i})_{B})$ and $\sum_{i}p_{i}C_{f}((\r^{i})_{C})\geq C_{f}(\sum_{i}p_{i}(\r^{i})_{C})$, which lead to
\begin{equation}
    C_{f}(\r_{ABC})\geq C_{f}(\sum_{i}p_{i}(\r^{i})_{A})+C_{f}(\sum_{i}p_{i}(\r^{i})_{B})+C_{f}(\sum_{i}p_{i}(\r^{i})_{C}).
\end{equation}
We note that $\r_{A}=\sum_{i}p_{i}(\r^{i})_{A}$, $\r_{B}=\sum_{i}p_{i}(\r^{i})_{B}$, and $\r_{C}=\sum_{i}p_{i}(\r^{i})_{C}$. So we  finally prove that Eq. \eqref{eq:ABC} holds for all states. This completes the proof.
\end{proof}

\subsection{superadditivity of multipartite states}

In this section, we generalize tripartite states to multipartite states about the superadditivity of the convex roof measure on the basis of Theorem  \ref{thm:main}. 

\begin{definition}
    A convex roof coherence measure $C_{f}$ of multipartite states is said to be superadditive if the following ralation is valid for all density matrics $\r_{A_{1}A_{2}...A_{n}}$ of a finite-dimensional system,
\begin{equation}
\label{eq:An}
    C_{f}(\r_{A_{1}A_{2}...A_{n}})\geq C_{f}(\r_{A_{1}})+C_{f}(\r_{A_{2}})+...+C_{f}(\r_{A_{n}})=\sum_{j=1}^{n}C_{f}(\r_{A_{j}}).
\end{equation}
Here $\r_{A_{i}}=\tr_{A_{1}...A_{i-1}A_{i+1}...A_{n}}\r_{A_{1}A_{2}...A_{n}},\ $i=1,2,...,n.
\end{definition}

\begin{theorem}
\label{thm:An}
A convex roof coherence measure $C_{f}$ is superadditive if the following inequality is satisfied for all pure states $|\a\rangle_{A_{1}A_{2}...A_{n}}=\sum_{i_{1}...i_{n}}c_{i_{1}...i_{n}}|i_{1}\rangle_{A_{1}}|i_{2}\rangle_{A_{2}}...|i_{n}\rangle_{A_{n}}$ with $\sum|c_{i_{1}...i_{n}}|^{2}=1$,
\begin{equation}
\label{eq:A}
\begin{split}
     C_{f}(|\a\rangle_{A_{1}A_{2}...A_{n}}) & \geq \sum_{i_{1}}p_{i_{1}}C_{f}(|\a_{i_{1}}\rangle_{A_{1}} )+ \sum_{i_{2}}p_{i_{2}}C_{f}(|\a_{i_{2}}\rangle_{A_{2}} )+...+\sum_{i_{n}}p_{i_{n}}C_{f}(|\a_{i_{n}}\rangle_{A_{n}} ) \\
     & =\sum_{j=1}^{n}\sum_{i_{j}}p_{i_{j}}C_{f}(|\a_{i_{j}}\rangle_{A_{j}} ) . 
\end{split}
\end{equation}
Here $p_{i_{j}}=\sum_{i_{1}...i_{j-1},i_{j+1}...,i_{n}}|c_{i_{1}...i_{n}}|^{2}$ and $|\a_{i_{j}}\rangle_{A_{j}}=\frac{1}{\sqrt{p_{i_{j}}}}\sum_{i_{1}...i_{j-1},i_{j+1}...,i_{n}}c_{i_{1}...i_{n}}|i_{j}\rangle_{A_{j}},j=1,2,...,n$.
\end{theorem}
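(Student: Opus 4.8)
The plan is to mirror the structure of the proof of Theorem \ref{thm:main}, replacing the three-way split by an $n$-way split. Assume the hypothesis \eqref{eq:A} holds for all pure states $|\a\rangle_{A_1\cdots A_n}$. The argument has two layers: first establish the superadditivity inequality \eqref{eq:An} for pure states, then lift it to arbitrary mixed states by a convex-roof/convexity argument.

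\emph{Pure states.} Fix a pure state $\r_{A_1\cdots A_n}=|\a\rangle\langle\a|$. For each party $A_j$, the reduced state is $\r_{A_j}=\sum_{i_j}p_{i_j}|\a_{i_j}\rangle\langle\a_{i_j}|_{A_j}$, which is a valid (generally non-optimal) ensemble decomposition. Hence, by the ``monotonicity under selective measurements on average'' / definition-of-infimum property recorded in Definition \ref{def:c}, we get $\sum_{i_j}p_{i_j}C_f(|\a_{i_j}\rangle_{A_j})\ge C_f(\r_{A_j})$ for each $j=1,\dots,n$. Summing these $n$ inequalities and chaining with the assumed bound \eqref{eq:A} yields
\begin{equation}
\label{eq:purechain}
C_f(|\a\rangle_{A_1\cdots A_n})\;\ge\;\sum_{j=1}^{n}\sum_{i_j}p_{i_j}C_f(|\a_{i_j}\rangle_{A_j})\;\ge\;\sum_{j=1}^{n}C_f(\r_{A_j}),
\end{equation}
which is \eqref{eq:An} for pure states. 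One small caveat to address carefully: I should check (or simply cite as in Lemma \ref{lemma}) that the decomposition $\r_{A_j}=\sum_{i_j}p_{i_j}|\a_{i_j}\rangle\langle\a_{i_j}|$ is indeed the marginal obtained by tracing out all other parties in the computational product basis, so that the $p_{i_j}$ and $|\a_{i_j}\rangle_{A_j}$ defined in the statement genuinely arise from $\r_{A_j}$.

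\emph{Mixed states.} Let $\r_{A_1\cdots A_n}$ be arbitrary and let $\r_{A_1\cdots A_n}=\sum_k t_k|\psi_k\rangle\langle\psi_k|$ be an optimal decomposition realizing the infimum in $C_f(\r_{A_1\cdots A_n})$. Applying \eqref{eq:purechain} to each pure component $|\psi_k\rangle$ gives $C_f(\r_{A_1\cdots A_n})=\sum_k t_k C_f(|\psi_k\rangle)\ge \sum_k t_k\sum_{j=1}^n C_f((\r^k)_{A_j})$, where $(\r^k)_{A_j}$ is the $A_j$-marginal of $|\psi_k\rangle\langle\psi_k|$. For each fixed $j$, convexity of $C_f$ (fourth axiom in Definition \ref{def:c}) gives $\sum_k t_k C_f((\r^k)_{A_j})\ge C_f\bigl(\sum_k t_k(\r^k)_{A_j}\bigr)=C_f(\r_{A_j})$, since partial trace is linear and $\sum_k t_k(\r^k)_{A_j}=\r_{A_j}$. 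Summing over $j$ completes the proof.

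The main obstacle is not conceptual but bookkeeping: carrying the multi-index notation $i_1,\dots,i_n$ cleanly through the marginalization step, and making precise that each reduced state $\r_{A_j}$ decomposes exactly as $\sum_{i_j}p_{i_j}|\a_{i_j}\rangle\langle\a_{i_j}|_{A_j}$ with the $|\a_{i_j}\rangle_{A_j}$ as defined. Everything else is a direct $n$-fold repetition of the tripartite argument, so I would present the pure-state step in full multi-index detail once and then invoke ``by the same reasoning for each $j$'' rather than writing $n$ near-identical paragraphs. An optional remark could note that, just as in the bipartite and tripartite cases, one may regroup any subset of parties and apply the same inequality, giving a family of superadditivity bounds; but this is not needed for the statement as posed.
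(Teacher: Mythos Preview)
Your proposal is correct and follows essentially the same approach as the paper's own proof: first use that $\r_{A_j}=\sum_{i_j}p_{i_j}|\a_{i_j}\rangle\langle\a_{i_j}|_{A_j}$ is an ensemble decomposition so the convex-roof infimum gives $\sum_{i_j}p_{i_j}C_f(|\a_{i_j}\rangle_{A_j})\ge C_f(\r_{A_j})$, chain with the hypothesis for the pure case, and then lift to mixed states via an optimal decomposition plus convexity. The paper presents exactly this two-step argument with the same ingredients and in the same order.
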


\begin{proof}
For the pure state $\r_{A_{1}A_{2}...A_{n}}$, utilize the optimal decomposition of $\r_{A_{j}}$ that achieves the infimum in the definition of $C_{f}$. We use $\r_{A_{j}}=\sum_{i_{j}}q_{i_{j}}|\psi_{i_{j}}\rangle\langle\psi_{i_{j}}|_{A_{j}}$ to represent the optimal decomposition. Since $\r_{A_{j}}=\sum_{i_{j}}p_{i_{j}}|\a_{i_{j}}\rangle\langle\a_{i_{j}}|_{A_{j}}$ is also an ensemble decomposition of $\r_{A_{j}}$.
According to the definition of convex roof coherence measure, there is $\sum_{i_{j}}p_{i_{j}}C_{f}(|\a_{i_{j}}\rangle_{A_{j}}) \geq \sum_{i_{j}}q_{i_{j}}C_{f}(|\psi_{i_{j}}\rangle_{A_{j}}) =  C_{f}(\r_{A_{j}})$. Then, we have proven Eq. \eqref{eq:An} for pure states,
\begin{equation}
    \label{eq:pAn}
C_{f}(|\a\rangle_{A_{1}A_{2}...A_{n}})\geq \sum_{j=1}^{n}\sum_{i_{j}}p_{i_{j}}C_{f}(|\a_{i_{j}}\rangle_{A_{j}} )\geq \sum_{j=1}^{n}C_{f}(\r_{A_{j}}).
\end{equation}

Next, we prove that Eq. \eqref{eq:An} holds for all states, and we use $\r_{A_{1}A_{2}...A_{n}}=\sum_{k}q_{k}|\psi_{k}\rangle\langle\psi_{k}|_{A_{1}A_{2}...A_{n}}$ to represent one of the optimal decompositions that gives $C_{f}(\r_{A_{1}A_{2}...A_{n}})$. Using Eq. \eqref{eq:pAn}, we have
\begin{equation}
    \label{eq:aAn}
    \begin{split}
        C_{f}(\r_{A_{1}A_{2}...A_{n}})& =\sum_{k}q_{k}C_{f}(|\psi_{k}\rangle_{A_{1}A_{2}...A_{n}})\\
        &\geq \sum_{k}q_{k}(C_{f}(\r^{k})_{A_{1}}+C_{f}(\r^{k})_{A_{2}}+...+C_{f}(\r^{k})_{A_{n}})\\
        &=\sum_{j=1}^{n}\sum_{k}q_{k}(C_{f}(\r^{k})_{A_{j}}).
    \end{split}
\end{equation}Here $(\r^{k})_{A_{j}}=\tr_{A_1\cdots A{j-1}A_{j+1}\cdots A_n}|\psi_{k}\rangle\langle\psi_{k}|_{A_{1}A_{2}...A_{n}}$. According to the convexity of a coherence measure, there is $\sum_{k}q_{k}C_{f}(\r^{k})_{A_{j}}\geq C_{f}(\sum_{k}q_{k}(\r^{k})_{A_{j}}) $, which lead to
\begin{equation}
    \label{eq:final}
    \begin{split}
        C_{f}(\r_{A_{1}A_{2}...A_{n}})&\geq \sum_{j=1}^{n}\sum_{k}q_{k}(C_{f}(\r^{k})_{A_{j}})\\
        &\geq \sum_{j=1}^{n}C_{f}(\sum_{k}q_{k}(\r^{k})_{A_{j}})=\sum_{j=1}^{n}C_{f}(\r_{A_{j}}).
    \end{split}
\end{equation}
This completes the proof.
\end{proof}

\section{Conditions for the equalities to hold}
\label{sec:equ}

This section investigates the conditions under which equality holds in coherence measure and superadditive relationships. We first analyze the equality conditions of the basic conditions of the coherence measure. After that, we describe the additivity conditions of bipartite states and give concrete examples.
\subsection{Conditions for the Equalities in Coherence Measures}
We investigate the conditions under which each inequality in Definition \ref{def:c} becomes an equality:

(i) Positive definiteness $ \ C(\rho)\geq 0$: The equation holds if and only if the state $\r$ is an incoherent state;

(ii) Monotonicity under incoherent operations:$ \ C(\rho)\geq C(\Lambda(\rho))$ if $\Lambda$ is an incoherent operation. The equation holds if and only if the state $\r$ is an incoherent state;
 
(iii) Monotonicity under selective measurements on average:$\ C(\rho)\geq \sum_{n}p_{n}C(\rho_{n})$, $p_{n}=Tr(K_{n}\rho K_{n}^{\dagger})$, $\rho_{n}=K_{n}\rho K_{n}^{\dagger}/p_{n}$, and $\Lambda(\rho)=\sum_{n}K_{n}\rho K_{n}^{\dagger}$ is an incoherent operation. The equation holds if and only if all states $\r_n$ and $\r$ have the same coherence measure value;

(iv) Convexity: $\sum_{n}q_{n}C(\rho_{n})\geq C(\sum_{n}q_{n}\rho_{n})$ for any set of states$\{\rho_{n}\}$ and probability distribution $\{q_{n}\}$. The equation holds if and only if all states $\r_n$ and $\r$ have the same coherence measure value.

\subsection{Conditions for the Equality of Superadditivity of Bipartite States}

We investigate the condition under which the inequality of Eq. \eqref{eq:AB} is saturated. We recall that the convex roof coherence measure is defined as $C_{f}(\varphi):=f(|c_{0}|^{2},|c_{1}|^{2},...,|c_{d-1}|^{2})$. Let $f:\bbR^d\rightarrow \bbR$ be a function defined in vectors. For any vectors $\boldsymbol{x}=(x_1,...,x_m)\in\bbR^m$, and $\boldsymbol{y}=(y_1,...,y_n)\in\bbR^n$, we have
\begin{equation}
\label{eq:xyn}
f(\boldsymbol{x})+f(\boldsymbol{y})=f(\boldsymbol{x}\otimes\boldsymbol{y}).
\end{equation}
Here, $\boldsymbol{x}\otimes\boldsymbol{y}=(x_1y_1,...,x_1y_n ,...,x_my_n)\in\bbR^{mn}$ denotes the Kronecker product of $\boldsymbol{x}$ and $\boldsymbol{y}$. 
\begin{theorem}
    
\label{cor:f}
Let $f$ be a function satisfying multiplicative separability in the form of Eq. \eqref{eq:xyn}. Then for the separable pure state $|\psi\rangle_{AB} = |\a\rangle \otimes |\b\rangle $, the pure states $|\a\rangle = \sum_i a_i |i\rangle_A$ and $|\b\rangle = \sum_j b_j |j\rangle_B$, with $\sum_i |a_i|^2 = 1$ and $\sum_j |b_j|^2 = 1$. The convex roof coherence measure $C_f$ satisfies the additivity
\begin{equation}
    \label{eq:eq}
C_f(|\psi\rangle_{AB})=C_f(|\a\rangle)+C_f(|\b\rangle).
\end{equation}
\end{theorem}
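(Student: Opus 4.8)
The plan is to unwind both sides of Eq.~\eqref{eq:eq} directly through the definition of $C_f$ on pure states and then invoke the multiplicative-separability hypothesis Eq.~\eqref{eq:xyn}. First I would write the product state explicitly in the computational basis: since $|\psi\rangle_{AB} = |\a\rangle\otimes|\b\rangle = \sum_{ij} a_i b_j\,|i\rangle_A|j\rangle_B$, the amplitude of the basis vector $|i\rangle_A|j\rangle_B$ is $c_{ij} = a_i b_j$, so the vector of squared moduli of $|\psi\rangle_{AB}$ is exactly $(|a_i|^2|b_j|^2)_{ij}$, which is the Kronecker product $\boldsymbol{x}\otimes\boldsymbol{y}$ with $\boldsymbol{x} = (|a_0|^2,\dots,|a_{m-1}|^2)$ and $\boldsymbol{y} = (|b_0|^2,\dots,|b_{n-1}|^2)$. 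Here $\sum_i|a_i|^2 = 1$ and $\sum_j|b_j|^2=1$ guarantee the composite vector is a legitimate probability vector, so $C_f$ is well defined on it.

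Next I would apply the definition of $C_f$ on pure states termwise:
\begin{equation}
C_f(|\psi\rangle_{AB}) = f\bigl((|a_ib_j|^2)_{ij}\bigr) = f(\boldsymbol{x}\otimes\boldsymbol{y}), \qquad C_f(|\a\rangle) = f(\boldsymbol{x}), \qquad C_f(|\b\rangle) = f(\boldsymbol{y}).
\end{equation}
Then the hypothesis Eq.~\eqref{eq:xyn} applied with these particular $\boldsymbol{x}$ and $\boldsymbol{y}$ gives $f(\boldsymbol{x}\otimes\boldsymbol{y}) = f(\boldsymbol{x}) + f(\boldsymbol{y})$, which is precisely $C_f(|\psi\rangle_{AB}) = C_f(|\a\rangle) + C_f(|\b\rangle)$. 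That closes the argument for pure states, and since the statement is only claimed for the separable pure state $|\psi\rangle_{AB}$ there is no mixed-state reduction to carry out.

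There is essentially no hard step here — the result is close to a tautology once the Kronecker-product structure of the amplitude vector of a product pure state is recognized. The only points requiring care are bookkeeping ones: verifying that the ordering of index pairs $(i,j)$ used to flatten $(|a_ib_j|^2)_{ij}$ into a vector matches the ordering in the definition of the Kronecker product $\boldsymbol{x}\otimes\boldsymbol{y} = (x_1y_1,\dots,x_1y_n,\dots,x_my_n)$ from the text (it does, under lexicographic ordering of the basis $\{|i\rangle_A\otimes|j\rangle_B\}$), and noting that $f$ is permutation-invariant in its arguments if one wants to be agnostic about that ordering — though the paper's convention makes even that unnecessary. If anything, the ``obstacle'' is purely expository: one should remark that this shows Eq.~\eqref{eq:AB} holds with equality on product pure states whenever $f$ is multiplicatively separable, so such $f$ are exactly the boundary case of the superadditivity bound of Lemma~\ref{lemma} and Theorem~\ref{thm:main}.
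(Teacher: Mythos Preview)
Your proposal is correct and follows essentially the same argument as the paper: both identify that the squared-modulus vector of $|\a\rangle\otimes|\b\rangle$ is the Kronecker product $\boldsymbol{x}\otimes\boldsymbol{y}$ of the individual squared-modulus vectors, and then apply Eq.~\eqref{eq:xyn} directly. Your version is slightly more compact notationally and adds helpful remarks on index ordering, but the logical content is identical.
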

\begin{proof}
 The coherence measure can be expressed as
$$C_f(|\a\rangle) = f( |a_0|^2, |a_1|^2, \ldots, |a_{d_A-1}|^2),$$
$$C_f(|\b\rangle) = f( |b_0|^2, |b_1|^2, \ldots, |b_{d_B-1}|^2 ).$$
The tensor product of $|\a\rangle$ and $|\b\rangle$ are expressed as
$$|\psi\rangle_{AB} = |\a\rangle \otimes |\b\rangle = ( \sum_i a_i |i\rangle_A ) \otimes( \sum_j b_j |j\rangle_B ) = \sum_{i,j} a_i b_j |i\rangle_A |j\rangle_B.$$
The coherence measure can be expressed as

\begin{equation}
    \begin{aligned}
    C_f(|\a\rangle)+C_f(|\b\rangle)
        &=f( |a_0|^2, |a_0|^2 ,\ldots,|a_{d_A-1}|^2)+f(|b_0|^2,|b_1|^2,\ldots,|b_{d_B-1}|^2 )\\
        &=f( |a_0|^2 |b_0|^2, |a_0|^2 |b_1|^2,\ldots,|a_i|^2 |b_j|^2,\ldots,|a_{d_A-1}|^2 |b_{d_B-1}|^2 )\\
        &= f( |a_0 b_0|^2, |a_0 b_1|^2,\ldots,|a_i b_j|^2,\ldots,|a_{d_A-1} b_{d_B-1}|^2 )\\   
        &= C_f(\a\otimes\b)=C_f(|\psi\rangle_{AB}).
    \end{aligned}
\end{equation}
That is, we prove the additivity of separable states in Eq. \eqref{eq:eq}.
\end{proof}

\begin{corollary}
    \label{cor:multif}
    Let $f$ remain the function introduced earlier that satisfies the conditions of Theorem \ref{cor:f}. Then, for a pure state composed as the tensor product of the pure state of the $n$ subsystem, $|\psi\rangle_{1,2,\dots,n}=|\psi_1\rangle\otimes|\psi_2\rangle\otimes\cdots\otimes|\psi_n\rangle$. The convex roof coherence measure $C_f$ satisfies the additive
 \begin{equation}
        \label{eq:multieq}
C_f(|\psi\rangle_{1,2,\dots,n})=\sum_{i=1}^nC_f(|\psi_i\rangle)
    \end{equation}
    \end{corollary}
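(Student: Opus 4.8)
The plan is to prove Corollary \ref{cor:multif} by induction on the number $n$ of subsystems, using Theorem \ref{cor:f} as the base case and the multiplicative separability identity \eqref{eq:xyn} as the engine for the inductive step. The key observation is that a tensor product of $n$ pure states can be regrouped as a bipartite product $|\psi_1\rangle\otimes\cdots\otimes|\psi_{n-1}\rangle$ against $|\psi_n\rangle$, so the $n$-partite additivity reduces to the $(n-1)$-partite case plus one application of the two-factor result already established.

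First I would set up the induction: the case $n=1$ is trivial, and the case $n=2$ is exactly Theorem \ref{cor:f}. For the inductive step, assume \eqref{eq:multieq} holds for $n-1$ factors. Write $|\Phi\rangle=|\psi_1\rangle\otimes\cdots\otimes|\psi_{n-1}\rangle$, expand each $|\psi_m\rangle=\sum_{i}c^{(m)}_i|i\rangle$, and note that the amplitude vector of $|\Phi\rangle$ in the product computational basis is precisely the Kronecker product $\boldsymbol{x}^{(1)}\otimes\cdots\otimes\boldsymbol{x}^{(n-1)}$ of the squared-modulus vectors — more carefully, the vector of probabilities $(|c^{(1)}_{i_1}\cdots c^{(n-1)}_{i_{n-1}}|^2)$ equals the Kronecker product of the individual probability vectors. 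Hence by iterating \eqref{eq:xyn} one gets $C_f(|\Phi\rangle)=f(\boldsymbol{p}^{(1)}\otimes\cdots\otimes\boldsymbol{p}^{(n-1)})=\sum_{m=1}^{n-1}f(\boldsymbol{p}^{(m)})=\sum_{m=1}^{n-1}C_f(|\psi_m\rangle)$, which is just the inductive hypothesis restated; alternatively one can simply cite it directly.

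Then I would apply Theorem \ref{cor:f} to the bipartite split $|\psi\rangle_{1,\dots,n}=|\Phi\rangle\otimes|\psi_n\rangle$, treating $|\Phi\rangle$ as the ``$A$'' system (on the composite Hilbert space $\mathcal H_1\otimes\cdots\otimes\mathcal H_{n-1}$ with its product computational basis) and $|\psi_n\rangle$ as the ``$B$'' system. This yields $C_f(|\psi\rangle_{1,\dots,n})=C_f(|\Phi\rangle)+C_f(|\psi_n\rangle)$. Combining with the inductive hypothesis $C_f(|\Phi\rangle)=\sum_{m=1}^{n-1}C_f(|\psi_m\rangle)$ gives $C_f(|\psi\rangle_{1,\dots,n})=\sum_{m=1}^{n}C_f(|\psi_m\rangle)$, completing the induction.

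The only genuine subtlety — and the step I would be most careful about — is verifying that Theorem \ref{cor:f} legitimately applies when the ``$A$'' factor is itself a multipartite system. This requires noting that $f$ is defined on probability vectors of arbitrary length (as stated, $f:\bbR^d\to\bbR$ for each $d$), that the reference basis on the composite $\mathcal H_1\otimes\cdots\otimes\mathcal H_{n-1}$ is the tensor-product computational basis, and that the amplitude vector of $|\Phi\rangle$ in that basis is exactly a Kronecker product — so that $C_f(|\Phi\rangle)$ computed as a ``single system'' coherence coincides with the value $f$ assigns to that Kronecker product. Once this bookkeeping is in place, the multiplicative separability \eqref{eq:xyn} does all the work and no new estimate is needed. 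I would state this regrouping explicitly to avoid any ambiguity about basis conventions, since that is where an incautious reader might worry.
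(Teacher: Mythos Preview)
Your proposal is correct and follows essentially the same argument as the paper: induction on $n$, with Theorem \ref{cor:f} as the base case and the inductive step obtained by regrouping $|\psi\rangle_{1,\dots,n}=|\Phi\rangle\otimes|\psi_n\rangle$ and applying the bipartite result. Your explicit attention to why Theorem \ref{cor:f} applies when the $A$ factor is itself a composite system (with the product computational basis) is a useful clarification that the paper leaves implicit.
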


\begin{proof}
    We proceed with the proof by mathematical induction. First, for the case $n=2$, it reduces to the original Theorem \ref{cor:f}: given $|\psi\rangle_{12}=|\psi_1\rangle\otimes|\psi_2\rangle$, since the measure satisfies additivity, we clearly have for a pure product state $C_f(|\psi\rangle_{12})=C_f(|\psi_1\rangle\otimes|\psi_2\rangle)=C_f(|\psi_1\rangle)+C_f(|\psi_2\rangle)$.  Now assume as the induction hypothesis that for any pure product state of the $k<n$ subsystems, one already has
    $$C_f(|\psi_1\rangle\otimes|\psi_2\rangle\otimes\cdots\otimes|\psi_k\rangle)=\sum_{i=1}^kC_f(|\psi_i\rangle$$
Consider the $n$-subsystem case $|\psi\rangle_{1,\dots,n}=|\phi\rangle_{1,\dots,n-1}\otimes|\psi_n\rangle$, where $|\phi\rangle_{1,\dots,n-1}=|\psi_1\rangle\otimes\cdots\otimes|\psi_{n-1}\rangle$ is the pure product state formed by the first $n-1$ subsystems. By the additivity of $C_f$ on pure product states, we have for:
$$C_f(|\psi\rangle_{1,\dots,n})=C_f(|\phi\rangle_{1,\dots,n-1})+C_f(|\psi_n\rangle)$$
By the induction hypothesis, 
$$C(|\phi\rangle_{1,\dots,n-1})=\sum_{i=1}^{n-1}C(|\psi_i\rangle)$$
Substituting it into the above, we obtain Eq. \eqref{eq:multieq}. This completes the proof.

\end{proof}

\subsection{Examples of Equality to Hold}

Example 1: The coherence measure of formation is defined as
\begin{equation}
    C_{for}(\r)=\inf_{\{p_{i}|\varphi_{i}\rangle\}}\sum_{i}p_{i}C_{r}(|\varphi_{i}\rangle),
\end{equation}
where $C_{r}(|\varphi\rangle)=S(\Delta(|\varphi_{i}\rangle\langle\varphi_{i}|))$ with $S(\r)=-\trace{\r\log_{2}\r}$ being the von Neumann entropy. We use $\Delta (\r)=\sum_{i}\r_{ii}|i\rangle\langle i|$ to denote the diagonal part of $\r$. To discuss the conditions under which the equation holds, we consider the pure state $|\a\rangle = \sum_i a_i |i\rangle_A$ and $|\b\rangle = \sum_j b_j |j\rangle_B$, with $\sum_i |a_i|^2 = 1$ and $\sum_j |b_j|^2 = 1$. We have 
$$C_{for}(|\a\rangle)=-\sum_{i}|a_{i}|^{2}\log_{2}|a_{i}|^{2},$$
$$C_{for}(|\b\rangle)=-\sum_{j}|b_{j}|^{2}\log_{2}|b_{j}|^{2}.$$
For the separable state $|\psi\rangle_{AB}=|\a\rangle\otimes|\b\rangle$, the coherence measure can be expressed as
$$\begin{aligned}
    C_{for}(|\psi\rangle_{AB})&=-\sum_{i,j}|c_{ij}|^{2}\log_{2}|c_{ij}|^{2}\\
&=-\sum_{i}|a_{i}|^{2}\log_{2}|a_{i}|^{2}-\sum_{j}|b_{j}|^{2}\log_{2}|b_{j}|^{2}\\
& =C_{for}(|\a\rangle) + C_{for}(|\b\rangle).
\end{aligned}$$
It shows that for the coherence measure of formation, the separable state satisfies the Eq. \eqref{eq:eq}.

Example 2: The coherence measure of the convex roof based on the entropy $\frac{1}{2}$ is defined as
$C_{\frac{1}{2}}(\r)=\inf_{\{p_i||\varphi_i\rangle\}}\sum_i p_i C_{\frac{1}{2}}(|\varphi_i\rangle),$
where $C_{\frac{1}{2}}(|\varphi\rangle)=2\log_2(\sum_{i=1}^d|c_i|
) $, for $|\varphi\rangle=\sum_ic_i|i\rangle$. The function $f$ satisfies Eq. \eqref{eq:xyn}. We consider pure states $|\a\rangle = \sum_i a_i |i\rangle_A$ and $|\b\rangle = \sum_j b_j |j\rangle_B$, with $\sum_i |a_i|^2 = 1$ and $\sum_j |b_j|^2 = 1$. We have
$$C_{\frac{1}{2}}(|\a\rangle)=2\log_2(\sum_{i=0}^{d-1}|a_i|) ,$$
$$C_{\frac{1}{2}}(|\b\rangle)=2\log_2(\sum_{i=0}^{d-1}|b_i|). $$
For the separable state $|\psi\rangle_{AB}=|\a\rangle\otimes|\b\rangle$, the coherence measure can be expressed as
$$\begin{aligned}
    C_{\frac{1}{2}}(|\psi\rangle_{AB})&=2\log_2(\sum_{i,j}|c_{ij}|)\\
&=2\log_2(\sum_{i=0}^{d-1}|a_i|)+2\log_2(\sum_{i=0}^{d-1}|b_i|)\\
& =C_{\frac{1}{2}}(|\a\rangle) + C_{\frac{1}{2}}(|\b\rangle).
\end{aligned}$$
It shows that for the coherence measure of the convex roof based on the entropy $\frac{1}{2}$, the separable state satisfies the Eq. \eqref{eq:eq}.

\section{Conclusion}
\label{sec:con}

We proposed and proved the structural criterion for attaining convex roof coherence measure superadditivity in bipartite states. Then we extended this result to the pure states of the general multipartite states and to the mixed states under its convex roof extension. Furthermore, we give the sufficient conditions when the above superadditive inequalities are equal. We explain and verify it through typical examples, such as the coherence measure of formation. Finally, we summarized the full text and pointed out several research directions that need to be further explored. The study of the additivity of the mixed states is difficult. It is hard to apply this additivity rule directly because the mixed states make the superposition of the coherent characteristic of each part difficult. Future research can further explore the necessity and completeness of the proposed conditions and whether these conditions are sufficient but also necessary for certain specific coherent measures. Another direction is to extend the theory to other coherent measures and discuss their additivity.

\section*{Acknowledgements}
Authors were supported by the NNSF of China (Grant No.12471427) and the Fundamental Research Funds for the Central Universities (Grant No. 4303088).

\bibliographystyle{unsrt}

\bibliography{main}

\end{document}